\def\dOi{12(2:11)2016}
\keywords{bisimulation, up-to techniques, coinduction, companion}
\theoremstyle{definition}
\newcommand{\act}{a}
\newcommand{\states}{{\bf S}}
\newcommand{\trans}[2]{{#1} \stackrel{\act}{\longrightarrow} {#2}}
\newcommand{\bisim}{\stackrel{\mbox{\bf .}}{\sim}}
\newcommand{\progress}{\rightarrowtail}
\newcommand{\F}{\mathcal{F}}
\newcommand{\LRF}{\mbox{\sc{lrf}}}  
\newcommand{\R}{\mathrel{R}}
\renewcommand{\S}{\mathrel{S}}
\begin{document}
\title[The Largest Respectful Function]{The Largest Respectful Function}
\author[J.~Parrow]{Joachim Parrow}
\author[T.~Weber]{Tjark Weber}
\address{Uppsala University, Sweden}

\begin{abstract}
  Respectful functions were introduced by Sangiorgi as a compositional
  tool to formulate short and clear bisimulation proofs.  Usually, the larger the respectful function, the easier the bisimulation proof.
  In particular the largest respectful function, defined as the
  pointwise union of all respectful functions, has been shown to be very useful.  We here
  provide an explicit and constructive characterization of it.
\end{abstract}

\maketitle

\section{Introduction}

\subsection{Bisimulation and up-to techniques}

The well known method of bisimilarity for defining behavioural
equivalence on labelled transition systems works as follows. A
symmetric binary relation $\R$ is a {\em bisimulation} if whenever $p
\R q$ and $\trans{p}{p'}$ then $\trans{q}{q'}$ and $p'\R q'$.  In
other words, whatever $p$ can do can be mimicked by $q$ such that the
derivatives are still related.  The idea is usually attributed to
Park~\cite{Park1981} although similar notions in logics and non well
founded sets were present earlier, and it was popularized by Milner in
his subsequent papers and book~\cite{Milner1989} on communicating
systems.  Bisimulations are closed under union and therefore the union
of them all is the largest bisimulation, written~$\bisim$ and called
{\em bisimilarity}. The main point is that if $p \bisim q$ then $p$
and $q$ can mimic each other indefinitely, and are thus inseparable
for an observer who can only detect the labels of the transitions.

In order to establish $p \bisim q$ we must find a bisimulation
relation containing the pair $(p,q)$. On the one hand we would like
this relation to be as small as possible, since for every pair in it
we must check that all transitions can be mimicked. On the other hand
we would also want it as large as possible in order to facilitate the
proof of $p'\R q'$, that the derivatives are related. These apparently
conflicting interests were noted already by Milner~\cite{Milner1989}
who suggested a remedy: instead of requiring $p'\R q'$ in the
consequent it suffices to require $p' \,\mathop{\bisim} \circ R \circ
\mathop{\bisim}\, q'$. This means that before establishing membership
in~$R$ we are allowed to replace the derivatives with already known
bisimilar ones. In many cases this makes the proofs significantly
easier and clearer. Milner dubbed the technique {\em bisimulation
  up-to~$\bisim$} and it quickly caught on. Variants of it were used
in many other process algebras, and other up-to techniques such as
up-to one-hole contexts
turned out to be useful.

The first to establish a systematic theory of up-to techniques was
Sangiorgi~\cite{Sangiorgi1994,Sangiorgi1998}, defining several
important notions. One is a {\em progress} relation between binary
relations. Briefly put the relation~$R$ progresses to the
relation~$S$, written $R \progress S$, if whenever $p \R q$ and
$\trans{p}{p'}$ then $\trans{q}{q'}$ and $p'\S q'$, and vice versa. A bisimulation is
thus a relation that progresses to itself. Milner's original up-to
technique uses a relation $\R$ progressing to $\mathop{\bisim} \circ R
\circ \mathop{\bisim}$. In general an up-to technique uses a function
$\F$ on relations such that $\R$ progresses to $\F(R)$, i.e., if $p \R
q$ and $\trans{p}{p'}$ then $\trans{q}{q'}$ and $p' \mathrel{\F(R)}
q'$. The question then is which such $\F$ are {\em sound} in the sense
that they guarantee that $R$ is included in bisimilarity, and how they
can be constructed in a systematic way.

\subsection{Respectful Functions}

Sangiorgi showed that although the sound functions in general are not
closed under composition, a subset of them, called the {\em
  respectful} functions, are closed under composition, (pointwise)
union and iteration. This means that complicated respectful functions
can be constructed in a modular way from simple components, which
clarifies and shortens bisimulation proofs: the effect is that up-to
techniques based on respectful functions can be freely combined. The
respectfulness criterion is that if $R \subseteq S$ and $R\progress S$
then $\F(R) \subseteq \F(S)$ and $\F(R) \progress \F(S)$. In other
words, if a relation progresses to a superset then the same must hold
when $\F$ is applied. Similar notions have lent structure to proofs in
more advanced settings such as the pi-calculus~\cite{Hirschkoff1997}
and psi-calculi~\cite{Pohjola2016}. The idea has also been recast in a
more general form by Pous~\cite{Pous2007}, where the corresponding
notion is of so called {\em compatible} functions, though we shall
here stick with Sangiorgi's original definitions and notations.

In a typical bisimulation proof one begins with processes $p$ and $q$
to be proved bisimilar, and then defines a relation $R$ containing
them such that $R$ progresses to $\F(R)$ for some respectful $\F$. It
then follows that $R \subseteq \mathop{\bisim}$ and the proof is
concluded. It does not matter which $\F$ is used as long as it is
respectful. Often,  the larger $\F(R)$ the easier the proof. Therefore
a viable strategy is to simply choose $\F(R)$ as the largest of all
possible images of $R$ for a respectful function. This is well defined
since the respectful functions are closed under arbitrary pointwise
union; we can simply define the largest respectful function as the
pointwise union of all respectful functions. In other words a relation
$R$ is mapped to the union of all $\F(R)$ for all respectful $\F$.

To our knowledge the first explicit mention of something similar to
this is in Hur et al.~\cite{Hur2013} where the largest respectful
function is denoted by~$\dagger$. It is remarked that~$\dagger$ is
useful in coinductive proofs even though it lacks a constructive
definition. The authors write ``\ldots the greatest respectful up-to
function is so powerful, we see no point in ever stating a proof
component's contribution involving a different respectful up-to
function.''  In recent work~\cite{Pous2016} Pous presents a similar
idea, that of the largest compatible function, called the {\em
  companion}, and demonstrates its usefulness in a variety of
settings.  In a sense it is remarkable that although the notion of
respectfulness has been around for more than 20 years, the largest and
therefore most general respectful function has not been investigated
and given an independent characterization.

The contribution of this short note is to give an explicit
characterization of the largest respectful function. It is easy to see
that if $R \subseteq \mathop{\bisim}$ then it must map $R$ to
$\bisim$, but for other $R$ the situation has been less clear. Our
characterization uses Milner's stratification of
bisimilarity~\cite{Milner1989} $\sim_\alpha$ for any ordinal
$\alpha$. In brief our result is that the largest respectful function
maps $R$ to $\sim_\alpha$ where $\alpha$ is the largest ordinal such
that $R \subseteq \,\sim_\alpha$.
The following section provides detailed definitions and proof of this
result. In the final section we briefly mention a generalisation to
complete lattices and demonstrate the connection with the compatible
functions by Pous.

\section{Result}

We assume a labeled transition system~$(\states, \Delta,
\longrightarrow)$ where ${\bf S}$ is a set of states, $\Delta$ is a
set of labels, and $\longrightarrow \subseteq \states \times \Delta
\times \states$. We let $p$, $q$ etc.\ range over $\states$ and $\act$
over $\Delta$. For $(p, a, p') \in \mathop{\longrightarrow}$ we write
$\trans{p}{p'}$. We also let $R$, $S$, $X$ range over binary relations
on~$\states$.

\subsection{Progression and Respectfulness}
\begin{defi}[Progress]
\label{progress}
We say that~$R$ progresses to~$S$, denoted $R \progress S$, if for all
$(p,q) \in R$:
\begin{enumerate}
\item $\forall p',\act.\ \trans{p}{p'} \Longrightarrow \exists
  q'.\ \trans{q}{q'} \wedge (p',q') \in S$, and
\item $\forall q',\act.\ \trans{q}{q'} \Longrightarrow \exists
  p'.\ \trans{p}{p'} \wedge (p',q') \in S$
\end{enumerate}
\end{defi}

The connection with bisimilarity is that a bisimulation is a relation
that progresses to itself, and $\bisim$ is the union of all
bisimulations, but we shall formally not need these notions in our
proofs below.

\begin{lem}
\label{progress-monotone}
For all relations~$R'$, $R$, $S$, $S'$,
$$R' \subseteq R \wedge S \subseteq S' \wedge R \progress S
  \Longrightarrow R' \progress S'$$
\end{lem}

\begin{proof}
Immediate from Definition~\ref{progress}.
\end{proof}

\begin{defi}[Respectfulness]
\label{respectfulness}
A function~$\F$ on relations is called \emph{respectful} if, for all
relations~$R$ and~$S$,
$$R \subseteq S \wedge R \progress S \Longrightarrow \F(R) \subseteq
\F(S) \wedge \F(R) \progress \F(S)$$
\end{defi}

It is known~\cite{Sangiorgi1998} that the pointwise union of
arbitrarily many respectful functions is respectful.  Thus, there is a
largest respectful function, namely the pointwise union of all
respectful functions.  Our main result below is to provide an
alternative characterization of this function.

\subsection{Stratification of Bisimilarity}

We will use Greek letters $\alpha$, $\beta$, \ldots\ to range over the
ordinals, and reserve $\lambda$ to stand for a limit ordinal. The
following definition is due to Milner~\cite{Milner1989}, originally
with the intention of connecting bisimilarity to logical formulas.

\begin{defi}[Stratifications of Bisimilarity]
\label{def:stratifications}
For every ordinal~$\alpha$, define a relation~$\sim_\alpha$ as
follows:
\begin{align*}
\mathop{\sim_0}        &:=\ {\bf S} \times {\bf S}\\
\mathop{\sim_{\alpha+1}} &:=\ \bigcup \{X \mid X \progress \mathop{\sim_\alpha}\}\\
\sim_\lambda            &:=\ \bigcap_{\alpha < \lambda} \mathop{\sim_\alpha}
\end{align*}
\end{defi}

The first clause can be regarded as a special case of the clause for
limit ordinals (an intersection of zero sets is the universal
relation), but it is clearer to write out the base case explicitly.
Milner's idea is that as $\alpha$ increases, $\sim_\alpha$ corresponds
to finer behavioural equivalences.  Thus $p \sim_0 q$ always holds, $p
\sim_1 q$ means that $p$ and $q$ can mimic each other for one
transition, $p \sim_2 q$ that they can mimic each other for two
transitions, and so on.  If $\longrightarrow$ is finitely branching,
i.e., for each state the set of outgoing transitions is finite, this
sequence converges at $\omega$, i.e., $\mathop{\sim_\omega} =
\mathop{\sim_{\omega+1}}$.  In general this is not the case.  The
construction in the following lemma is illuminating, and we have not
seen it explicitly stated before, although we shall not need it for
our main result.

\begin{lem}
\label{distinct-equivalences}
For any ordinal~$\alpha$ there exists a transition system
where~$\sim_\alpha$ and~$\sim_{\alpha+1}$ are distinct.
\end{lem}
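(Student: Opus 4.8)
The plan is to produce, for a fixed ordinal $\alpha$, a single transition system in which $\sim_\alpha$ and $\sim_{\alpha+1}$ already differ. I would take the states to be the ordinals $\beta \le \alpha+1$, use one label $\act$, and decree that $\trans{s_\beta}{s_\gamma}$ holds precisely when $\gamma < \beta$; thus each $s_\beta$ can step to every strictly smaller ordinal and to nothing else, and $s_0$ is stuck. This is a legitimate set-sized transition system, and every state reachable from $s_{\alpha+1}$ already lies within it, so the stratification can be computed internally.

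The heart of the argument is the claim that for all $\beta, \beta' \le \alpha+1$ and every ordinal $\gamma$,
\[ s_\beta \sim_\gamma s_{\beta'} \iff \min(\beta,\gamma) = \min(\beta',\gamma), \]
which I would establish by transfinite induction on $\gamma$. The base case $\gamma=0$ is immediate, since $\sim_0$ is universal and $\min(\beta,0)=\min(\beta',0)=0$. For a limit $\gamma=\lambda$ I would unfold $\sim_\lambda = \bigcap_{\delta<\lambda}\sim_\delta$ and check, using the induction hypothesis, that ``$\min(\beta,\delta)=\min(\beta',\delta)$ for all $\delta<\lambda$'' is equivalent to $\min(\beta,\lambda)=\min(\beta',\lambda)$; the only delicate case is when exactly one of $\beta,\beta'$ lies below $\lambda$, where the limit property of $\lambda$ is used to find a $\delta$ strictly between them that separates the two.

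The successor step is where the real work lies. I would first record the standard reformulation that $\sim_{\gamma+1}$ is exactly the relation of pairs whose transitions match each other into $\sim_\gamma$: since $\sim_{\gamma+1} = \bigcup\{X \mid X \progress \sim_\gamma\}$, the matching relation itself is the largest such $X$, so $s_\beta \sim_{\gamma+1} s_{\beta'}$ iff for every $\delta<\beta$ there is $\delta'<\beta'$ with $s_\delta \sim_\gamma s_{\delta'}$, and symmetrically. By the induction hypothesis this matching condition says precisely that the two sets $\{\min(\delta,\gamma)\mid\delta<\beta\}$ and $\{\min(\delta',\gamma)\mid\delta'<\beta'\}$ coincide, so the crux becomes the set identity
\[ \{\min(\delta,\gamma) \mid \delta < \beta\} = \{\epsilon \mid \epsilon < \min(\beta,\gamma+1)\}. \]
Granting this, the two matching directions together say exactly $\min(\beta,\gamma+1)=\min(\beta',\gamma+1)$, which completes the step.

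Finally, instantiating the characterization yields the lemma: taking $\beta=\alpha$ and $\beta'=\alpha+1$, one has $\min(\alpha,\alpha)=\alpha=\min(\alpha+1,\alpha)$, so $s_\alpha \sim_\alpha s_{\alpha+1}$, whereas $\min(\alpha,\alpha+1)=\alpha \neq \alpha+1 = \min(\alpha+1,\alpha+1)$, so $s_\alpha \not\sim_{\alpha+1} s_{\alpha+1}$; hence $\sim_\alpha$ and $\sim_{\alpha+1}$ are distinct. I expect the main obstacle to be the successor step, specifically the bookkeeping of $\min$ in the displayed set identity and keeping the two matching directions straight; the limit case, although it genuinely needs $\lambda$ to be a limit in order to locate a separating $\delta$, should be comparatively routine.
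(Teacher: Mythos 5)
Your proposal is correct and follows essentially the same route as the paper: the same transition system (ordinals up to $\alpha+1$ with transitions given by ordinal membership), the same transfinite induction on the stratification index, and the same concluding instantiation with the pair $(\alpha,\alpha+1)$. Your inductive invariant $s_\beta \sim_\gamma s_{\beta'} \iff \min(\beta,\gamma)=\min(\beta',\gamma)$ is just a symmetric reformulation of the paper's claim that for $\beta<\beta'$ one has $s_\beta \sim_\gamma s_{\beta'}$ iff $\gamma\leq\beta$, worked out in somewhat more detail in the successor and limit cases.
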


\begin{proof}
We will only need transition systems with a single transition label,
which we elide.  For any ordinal~$\alpha$ let the transition
system~$T_\alpha$ contain as states all ordinals less than or equal
to~$\alpha$, and let the transitions be defined by ordinal membership,
i.e., $\alpha \longrightarrow \beta$ if $\alpha > \beta$. Since the
transitions from a state will be the same in all~$T_\alpha$ where the
state occurs, we may elide explicit references to the transition
systems below.  We can now prove the following for all ordinals
$\alpha, \beta, \gamma$ with $\alpha < \beta$:
\[\mbox{$\alpha \sim_\gamma \beta$ iff $\gamma \leq \alpha$}\]
The proof is by transfinite induction over $\gamma$.  The cases where
$\gamma=0$ or a limit are immediate. Assume $\gamma = \gamma' +
1$. For the direction $\Leftarrow$, the only transitions from~$\alpha$
or~$\beta$ that cannot be mimicked directly (i.e., leading to exactly
the same state) are $\beta \longrightarrow \beta'$ for some $\beta'
\geq \alpha$.  A simulating transition is then $\alpha \longrightarrow
\gamma'$ and by induction $\gamma' \sim_{\gamma'} \beta'$.
Conversely, assume $\alpha \sim_{\gamma'+1} \beta$ and consider the
transition $\beta \longrightarrow \alpha$.  Since $\alpha$ can
simulate there is a transition $\alpha \longrightarrow \alpha'$ such
that $\alpha' \ \sim_{\gamma'} \alpha$ . By induction $\gamma' \leq
\alpha'$ whence $\gamma \leq \alpha$.

To conclude the proof of the lemma, take $T_{\alpha+1}$ where it holds
that $\alpha \sim_\alpha \alpha+1$ and $\alpha \not\sim_{\alpha+1}
\alpha+1$.
\end{proof}

We now establish that the relations $\sim_\alpha$ indeed become
smaller as $\alpha$ increases:

\begin{lem}
\label{stratifications-successor}
For all ordinals~$\alpha$, $\mathop{\sim_{\alpha+1}} \subseteq
\mathop{\sim_\alpha}$.
\end{lem}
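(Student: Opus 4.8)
The plan is to prove $\mathop{\sim_{\alpha+1}} \subseteq \mathop{\sim_\alpha}$ by transfinite induction on $\alpha$, threading through the recursive structure of Definition~\ref{def:stratifications}. The base case $\alpha = 0$ is trivial since $\mathop{\sim_0}$ is the universal relation, so $\mathop{\sim_1} \subseteq \mathop{\sim_0}$ holds automatically. For the limit case $\alpha = \lambda$, I would unfold both sides: $\mathop{\sim_{\lambda+1}} = \bigcup\{X \mid X \progress \mathop{\sim_\lambda}\}$ and $\mathop{\sim_\lambda} = \bigcap_{\beta < \lambda} \mathop{\sim_\beta}$. It suffices to show that any $X$ progressing to $\mathop{\sim_\lambda}$ is contained in $\mathop{\sim_\lambda}$; since $\mathop{\sim_\lambda} \subseteq \mathop{\sim_\beta}$ for each $\beta < \lambda$, Lemma~\ref{progress-monotone} gives $X \progress \mathop{\sim_\beta}$, and for $\beta$ a successor this exhibits $X$ as one of the sets in the union defining $\mathop{\sim_\beta}$ --- but one must be slightly careful to reduce to successor $\beta$, which is fine because every $\mathop{\sim_\beta}$ with $\beta$ a limit is itself an intersection of earlier strata.

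The heart of the argument is the successor step, proving $\mathop{\sim_{\alpha+2}} \subseteq \mathop{\sim_{\alpha+1}}$ from the induction hypothesis $\mathop{\sim_{\alpha+1}} \subseteq \mathop{\sim_\alpha}$. Here I would unfold $\mathop{\sim_{\alpha+2}} = \bigcup\{X \mid X \progress \mathop{\sim_{\alpha+1}}\}$ and $\mathop{\sim_{\alpha+1}} = \bigcup\{X \mid X \progress \mathop{\sim_\alpha}\}$. The key observation is that if $X \progress \mathop{\sim_{\alpha+1}}$, then by the induction hypothesis $\mathop{\sim_{\alpha+1}} \subseteq \mathop{\sim_\alpha}$ together with Lemma~\ref{progress-monotone}, we obtain $X \progress \mathop{\sim_\alpha}$. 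This places $X$ among the relations whose union defines $\mathop{\sim_{\alpha+1}}$, so $X \subseteq \mathop{\sim_{\alpha+1}}$. Taking the union over all such $X$ yields $\mathop{\sim_{\alpha+2}} \subseteq \mathop{\sim_{\alpha+1}}$.

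Lemma~\ref{progress-monotone} is the workhorse throughout: it lets me weaken the target of a progression from a finer relation to a coarser one, which is exactly how the induction hypothesis gets consumed. The main obstacle, such as it is, lies in the limit case bookkeeping --- specifically in arguing that $X \subseteq \mathop{\sim_\lambda}$ whenever $X \progress \mathop{\sim_\lambda}$, which requires relating the single progression $X \progress \mathop{\sim_\lambda}$ to membership in each stratum $\mathop{\sim_\beta}$ for $\beta < \lambda$. The clean way is to prove the slightly stronger auxiliary fact that $X \progress \mathop{\sim_\beta}$ implies $X \subseteq \mathop{\sim_{\beta+1}} \subseteq \mathop{\sim_\beta}$ uniformly, so that progressing to the intersection $\mathop{\sim_\lambda}$ forces containment in every $\mathop{\sim_\beta}$ and hence in their intersection. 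I expect no genuine difficulty beyond keeping the ordinal case distinctions straight, as the transition system plays no role and the entire argument is driven by the monotonicity of progression.
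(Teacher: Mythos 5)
Your proof is correct and follows essentially the same route as the paper: transfinite induction on $\alpha$, with the successor case pushing the induction hypothesis through Lemma~\ref{progress-monotone} to show that any $X$ with $X \progress \mathop{\sim_{\alpha+1}}$ also satisfies $X \progress \mathop{\sim_\alpha}$ and hence lies in $\mathop{\sim_{\alpha+1}}$, and the limit case combining $\mathop{\sim_\lambda} \subseteq \mathop{\sim_\beta}$ with the induction hypothesis at each $\beta < \lambda$. Your worry about ``reducing to successor $\beta$'' in the limit case is unnecessary --- $X \progress \mathop{\sim_\beta}$ gives $X \subseteq \mathop{\sim_{\beta+1}}$ directly from Definition~\ref{def:stratifications} for \emph{any} $\beta$ --- and the ``clean way'' you settle on at the end is exactly the paper's argument.
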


\begin{proof}
By transfinite induction over~$\alpha$.
\begin{enumerate}
\item $\alpha = 0$: Trivially $\mathop{\sim_1} \subseteq {\bf S}
  \times {\bf S} = \mathop{\sim_0}$.
\item $\alpha = \beta+1$: The induction hypothesis entails
  $\mathop{\sim_{\beta+1}} \subseteq \mathop{\sim_\beta}$.  Hence for
  any relation~$X$, $X \progress \mathop{\sim_{\beta+1}}$ implies $X
  \progress \mathop{\sim_\beta}$ by Lemma~\ref{progress-monotone}.
  Thus $\mathop{\sim_{\alpha+1}} = \bigcup\{X \mid X \progress
  \mathop{\sim_{\beta+1}}\} \subseteq \bigcup \{X \mid X \progress
  \mathop{\sim_\beta}\} = \mathop{\sim_{\beta+1}} =
  \mathop{\sim_\alpha}$.
\item $\alpha = \lambda$: For all~$\beta < \lambda$,
  $\mathop{\sim_\lambda} \subseteq \mathop{\sim_\beta}$ is immediate
  from Definition~\ref{def:stratifications}.  As in the successor
  case, this implies $\mathop{\sim_{\lambda+1}} \subseteq
  \mathop{\sim_{\beta+1}}$.  Moreover, $\mathop{\sim_{\beta+1}}
  \subseteq \mathop{\sim_\beta}$ by the induction hypothesis.  Thus
  $\mathop{\sim_{\lambda+1}} \subseteq \mathop{\sim_\beta}$ for
  all~$\beta < \lambda$, and therefore $\mathop{\sim_{\lambda+1}}
  \subseteq \bigcap_{\beta < \lambda} \mathop{\sim_\beta} =
  \mathop{\sim_\lambda}$.\qedhere
\end{enumerate}
\end{proof}

\begin{lem}
\label{stratifications-subset}
For all ordinals~$\alpha$ and~$\beta$, $\alpha \leq \beta
\Longrightarrow \mathop{\sim_\beta} \subseteq \mathop{\sim_\alpha}$.
\end{lem}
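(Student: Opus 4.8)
The plan is to fix~$\alpha$ and argue by transfinite induction on~$\beta$ that $\alpha \leq \beta$ implies $\mathop{\sim_\beta} \subseteq \mathop{\sim_\alpha}$. For $\beta < \alpha$ the premise $\alpha \leq \beta$ fails and the implication holds vacuously, so it suffices to treat $\beta \geq \alpha$. The base case $\beta = \alpha$ is trivial, since $\mathop{\sim_\alpha} \subseteq \mathop{\sim_\alpha}$.

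For the successor case $\beta = \gamma + 1$ with $\beta > \alpha$, observe that $\alpha \leq \gamma$, so the induction hypothesis supplies $\mathop{\sim_\gamma} \subseteq \mathop{\sim_\alpha}$. Combining this with Lemma~\ref{stratifications-successor}, which gives $\mathop{\sim_{\gamma+1}} \subseteq \mathop{\sim_\gamma}$, yields $\mathop{\sim_\beta} \subseteq \mathop{\sim_\alpha}$ by transitivity of inclusion. For the limit case $\beta = \lambda > \alpha$, the conclusion is immediate from the definition of $\mathop{\sim_\lambda}$ as $\bigcap_{\gamma < \lambda} \mathop{\sim_\gamma}$: since $\alpha < \lambda$, the relation~$\mathop{\sim_\alpha}$ occurs among the sets being intersected, so $\mathop{\sim_\lambda} \subseteq \mathop{\sim_\alpha}$. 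Notably this case does not even use the induction hypothesis.

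I expect no genuine difficulty here, as the previous lemma already encapsulates the one step of the argument that carries any content, namely that each successor level shrinks. The only thing to be careful about is the bookkeeping of the case split---in particular, checking in the successor case that $\beta > \alpha$ indeed forces $\alpha \leq \gamma$, so that the induction hypothesis is applicable. The limit case is then handled purely by the definition, and the whole proof reduces to a short transitivity chain.
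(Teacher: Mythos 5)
Your proof is correct and follows exactly the route the paper sketches: transfinite induction on $\beta$ with $\alpha$ fixed, invoking Lemma~\ref{stratifications-successor} plus transitivity in the successor case and reading the limit case off Definition~\ref{def:stratifications}. The only difference is that you spell out the bookkeeping (the vacuous cases and the check that $\alpha < \gamma+1$ forces $\alpha \leq \gamma$) that the paper leaves implicit.
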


\begin{proof}
By transfinite induction over~$\beta$, using
Lemma~\ref{stratifications-successor} for the successor case.  The
limit case is immediate from Definition~\ref{def:stratifications}.
\end{proof}

\subsection{Convergence of the Stratification}

As demonstrated in Lemma~\ref{distinct-equivalences}, there is no
universal ordinal to which the series of equivalences~$\sim_\alpha$
converges in all transition systems.  For the rest of this paper we
assume some fixed transition system.  The following lemmas then
establish that no matter what this transition system is, there exists
an ordinal where the series converges.

\begin{lem}
\label{stratifications-varepsilon}
There exists an ordinal~$\varepsilon$ such that
$\mathop{\sim_{\varepsilon+1}} = \mathop{\sim_\varepsilon}$.
\end{lem}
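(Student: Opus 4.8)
The plan is to exploit that the relations $\sim_\alpha$ form a weakly decreasing chain of subsets of the fixed set $\states \times \states$, and that such a chain cannot descend strictly through all the ordinals for cardinality reasons. So the proof splits into a monotonicity observation and a set-theoretic counting argument.

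First I would record the consequence of Lemma~\ref{stratifications-subset} that $\alpha \mapsto \sim_\alpha$ is monotone decreasing with respect to inclusion, and note that equality at two ordinals forces equality throughout the interval between them: if $\alpha < \beta$ and $\sim_\alpha = \sim_\beta$, then for every $\gamma$ with $\alpha \leq \gamma \leq \beta$ we have $\sim_\beta \subseteq \sim_\gamma \subseteq \sim_\alpha$, so all three coincide; in particular $\sim_{\alpha+1} = \sim_\alpha$ because $\alpha+1 \leq \beta$. Hence it suffices to produce \emph{any} two distinct ordinals on which the stratification agrees, and we may then take $\varepsilon$ to be the smaller of the two.

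Next I would argue that two such ordinals must exist. Every $\sim_\alpha$ is a subset of $\states \times \states$, hence an element of the power set $\mathcal{P}(\states \times \states)$, which is a set. Let $\delta$ be an ordinal whose cardinality strictly exceeds that of $\mathcal{P}(\states \times \states)$; such an ordinal exists by Hartogs' theorem. The assignment $\alpha \mapsto \sim_\alpha$ restricted to $\alpha < \delta$ is then a map from a set of cardinality $|\delta|$ into a set of strictly smaller cardinality, so it cannot be injective. Thus there are ordinals $\alpha < \beta < \delta$ with $\sim_\alpha = \sim_\beta$, and by the monotonicity observation above we obtain $\sim_{\alpha+1} = \sim_\alpha$. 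Equivalently, one could suppose for contradiction that $\sim_{\alpha+1} \neq \sim_\alpha$ for every ordinal $\alpha$; combined with monotonicity this makes $\alpha \mapsto \sim_\alpha$ strictly decreasing and hence injective on the entire class of ordinals, contradicting that its codomain $\mathcal{P}(\states \times \states)$ is a set.

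The only genuinely delicate point is this last, set-theoretic step: one must ensure that a strictly descending transfinite chain of subsets of a fixed set terminates. This is exactly where the standing assumption of a \emph{fixed} transition system is essential, since it guarantees that $\states$, and therefore $\mathcal{P}(\states \times \states)$, is an honest set rather than a proper class; the appeal to Hartogs' theorem (or, equivalently, to the fact that the ordinals form a proper class that admits no injection into a set) then does all the work. Everything else is routine monotonicity bookkeeping.
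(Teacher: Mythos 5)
Your proposal is correct and matches the paper's own argument: both invoke Hartogs's theorem to obtain an ordinal that cannot inject into the powerset of $\states \times \states$, deduce that two strata $\sim_\alpha = \sim_\beta$ with $\alpha < \beta$ must coincide, and then use the monotonicity of Lemma~\ref{stratifications-subset} to sandwich $\sim_{\alpha+1}$ and conclude. The only cosmetic difference is that you phrase Hartogs's theorem in terms of cardinality comparison rather than the (choice-free) non-existence of an injection, but the argument is the same.
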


The lemma is an instance of a well-known fixed point result for
monotone functions.  Our proof follows Rubin and
Rubin~\cite{Rubin1963}.

\begin{proof}
By Hartogs's theorem~\cite{Hartogs1915} we can find an
ordinal~$\kappa$ such that there is no injection from~$\kappa$ into
the powerset of ${\bf S} \times {\bf S}$.  Therefore, there exist
ordinals $\varepsilon < \beta < \kappa$ such that
$\mathop{\sim_{\varepsilon}} = \mathop{\sim_\beta}$.
Lemma~\ref{stratifications-subset} then implies
$\mathop{\sim_{\varepsilon+1}} = \mathop{\sim_\varepsilon}$.
\end{proof}

In the following we write $\varepsilon$ for the least ordinal provided
by Lemma~\ref{stratifications-varepsilon}, i.e., such that
$\mathop{\sim_{\varepsilon+1}} = \mathop{\sim_\varepsilon}$.

\begin{lem}
\label{stratifications-stable}
For all ordinals~$\alpha$, $\alpha \geq \varepsilon \Longrightarrow
\mathop{\sim_\alpha} = \mathop{\sim_\varepsilon}$.
\end{lem}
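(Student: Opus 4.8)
The plan is to proceed by transfinite induction on $\alpha$, restricting attention to $\alpha \geq \varepsilon$. The base case $\alpha = \varepsilon$ is immediate, since the two sides then coincide literally.

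For the successor case, suppose $\alpha = \beta + 1$ with $\alpha > \varepsilon$; then $\beta \geq \varepsilon$, so the induction hypothesis gives $\mathop{\sim_\beta} = \mathop{\sim_\varepsilon}$. Unfolding Definition~\ref{def:stratifications}, I would then compute $\mathop{\sim_{\beta+1}} = \bigcup\{X \mid X \progress \mathop{\sim_\beta}\} = \bigcup\{X \mid X \progress \mathop{\sim_\varepsilon}\} = \mathop{\sim_{\varepsilon+1}}$, and the defining property of $\varepsilon$ (Lemma~\ref{stratifications-varepsilon}) yields $\mathop{\sim_{\varepsilon+1}} = \mathop{\sim_\varepsilon}$. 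The only subtlety is the boundary: the inequality $\beta \geq \varepsilon$ is exactly what licenses the appeal to the induction hypothesis, and it is the restriction $\alpha > \varepsilon$ (the case $\alpha = \varepsilon$ having been dispatched as the base case) that guarantees it.

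The limit case is where the little work there is happens. Suppose $\alpha = \lambda > \varepsilon$, so that $\mathop{\sim_\lambda} = \bigcap_{\beta < \lambda} \mathop{\sim_\beta}$. I would split the index range at $\varepsilon$. For $\varepsilon \leq \beta < \lambda$ the induction hypothesis gives $\mathop{\sim_\beta} = \mathop{\sim_\varepsilon}$, so these terms contribute exactly $\mathop{\sim_\varepsilon}$ to the intersection, and since $\lambda > \varepsilon$ at least one such index exists. For $\beta < \varepsilon$, Lemma~\ref{stratifications-subset} gives $\mathop{\sim_\varepsilon} \subseteq \mathop{\sim_\beta}$, so these terms cannot shrink the intersection below $\mathop{\sim_\varepsilon}$. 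Hence the whole intersection collapses to $\mathop{\sim_\varepsilon}$.

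I expect no genuine obstacle: the argument is a routine transfinite induction whose only delicate point is combining the induction hypothesis (for indices at or above $\varepsilon$) with the monotonicity Lemma~\ref{stratifications-subset} (for indices below $\varepsilon$) to see that the limit intersection does not drop below the fixed point. The conceptual content is simply that once the stratification reaches a fixed point at $\varepsilon$, the successor step reproduces it and limits cannot escape it because all earlier stages are supersets.
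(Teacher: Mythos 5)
Your proof is correct and follows exactly the route the paper sketches: transfinite induction on $\alpha$, with Lemma~\ref{stratifications-varepsilon} handling the successor step and Lemma~\ref{stratifications-subset} together with the induction hypothesis collapsing the limit intersection to $\mathop{\sim_\varepsilon}$. You have merely filled in the details the paper leaves implicit, and done so accurately.
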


\begin{proof}
By transfinite induction over~$\alpha$, using
Lemma~\ref{stratifications-varepsilon} for the successor case.  The
limit case follows from the induction hypothesis and
Lemma~\ref{stratifications-subset}.
\end{proof}

\subsection{Progression of Strata}

We next show that $\sim_\varepsilon$ progresses to itself.

\begin{lem}
\label{stratifications-progress}
For all ordinals~$\alpha$, $\mathop{\sim_{\alpha+1}} \progress
\mathop{\sim_\alpha}$.
\end{lem}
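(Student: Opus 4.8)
The goal is to show that $\sim_{\alpha+1} \progress \sim_\alpha$ for every ordinal $\alpha$. The key observation is that $\sim_{\alpha+1}$ is defined in Definition~\ref{def:stratifications} as $\bigcup\{X \mid X \progress \sim_\alpha\}$, the union of all relations that progress to $\sim_\alpha$. So the statement is essentially asking whether this union \emph{itself} progresses to $\sim_\alpha$. My plan is therefore to prove a small general fact: whenever $X_i \progress S$ for every $i$ in some index set, the union $\bigcup_i X_i$ also progresses to $S$. Applying this with $S = \sim_\alpha$ and the family being exactly the relations appearing in the definition of $\sim_{\alpha+1}$ yields the lemma directly, with no transfinite induction needed.

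\begin{proof}
Write $\mathcal{G} = \{X \mid X \progress \mathop{\sim_\alpha}\}$, so that $\mathop{\sim_{\alpha+1}} = \bigcup \mathcal{G}$ by Definition~\ref{def:stratifications}. We show that $\bigcup \mathcal{G} \progress \mathop{\sim_\alpha}$, which is exactly the claim. Take any pair $(p,q) \in \bigcup \mathcal{G}$. Then there is some $X \in \mathcal{G}$ with $(p,q) \in X$, and by definition of $\mathcal{G}$ we have $X \progress \mathop{\sim_\alpha}$. Now verify the two clauses of Definition~\ref{progress} for the pair $(p,q)$ against $\mathop{\sim_\alpha}$. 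For clause~(1), suppose $\trans{p}{p'}$. Since $X \progress \mathop{\sim_\alpha}$ and $(p,q) \in X$, there is $q'$ with $\trans{q}{q'}$ and $(p',q') \in \mathop{\sim_\alpha}$, as required. Clause~(2) is symmetric. Since $(p,q) \in \bigcup \mathcal{G}$ was arbitrary, we conclude $\mathop{\sim_{\alpha+1}} = \bigcup \mathcal{G} \progress \mathop{\sim_\alpha}$.
\end{proof}

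The step I expect to be entirely routine here is the verification of the two progression clauses, since they follow immediately by picking the witnessing relation $X$ containing the pair and invoking its own progression property. There is no genuine obstacle: the lemma is really just the observation that progression to a \emph{fixed} target $S$ is preserved under arbitrary unions of the source relations, a property that is immediate from the existential structure of Definition~\ref{progress}. The only point requiring a moment's care is to make sure each pair is handled using a single witness $X$ from the family, rather than trying to argue about the union as a whole; once the correct witness is fixed, both clauses transfer verbatim.
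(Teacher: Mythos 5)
Your proof is correct and follows exactly the same route as the paper's (very terse) proof: given a pair in $\mathop{\sim_{\alpha+1}} = \bigcup\{X \mid X \progress \mathop{\sim_\alpha}\}$, pick the witnessing $X$ containing it and invoke its progression to $\mathop{\sim_\alpha}$. You have merely written out the verification of the two clauses that the paper leaves implicit.
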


\begin{proof}
Suppose $p \sim_{\alpha+1} q$.  By
Definition~\ref{def:stratifications}, $(p,q) \in X$ for some~$X$ such
that $X \progress \mathop{\sim_\alpha}$.
\end{proof}

\begin{lem}
\label{stratifications-bisimilarity}
$\mathop{\sim_\varepsilon} \progress \mathop{\sim_\varepsilon}$
\end{lem}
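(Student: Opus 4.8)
The plan is to read this off directly from the two preceding results. Lemma~\ref{stratifications-progress} holds for every ordinal, so instantiating it at $\alpha = \varepsilon$ gives $\mathop{\sim_{\varepsilon+1}} \progress \mathop{\sim_\varepsilon}$. By the defining property of $\varepsilon$ supplied by Lemma~\ref{stratifications-varepsilon}, we have $\mathop{\sim_{\varepsilon+1}} = \mathop{\sim_\varepsilon}$. Substituting the right-hand side for the left in the progression just obtained yields $\mathop{\sim_\varepsilon} \progress \mathop{\sim_\varepsilon}$, which is exactly the claim.

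There is essentially no obstacle here: the statement is a one-line corollary of Lemmas~\ref{stratifications-progress} and~\ref{stratifications-varepsilon}, and the only thing being done is rewriting $\mathop{\sim_{\varepsilon+1}}$ as $\mathop{\sim_\varepsilon}$ inside a progression. I would not invoke transfinite induction, Hartogs's theorem, or the stability Lemma~\ref{stratifications-stable}; all of the real work---locating a fixed point of the stratification and showing that each stratum progresses to its predecessor---has already been discharged earlier, so the proof reduces to a substitution.

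If one preferred not to cite Lemma~\ref{stratifications-progress} directly, an alternative would be to observe that $\mathop{\sim_{\varepsilon+1}} = \bigcup\{X \mid X \progress \mathop{\sim_\varepsilon}\}$ is by construction a union of relations each progressing to $\mathop{\sim_\varepsilon}$, and hence itself progresses to $\mathop{\sim_\varepsilon}$ (progression is preserved under union, which follows from Definition~\ref{progress}), and then perform the same substitution $\mathop{\sim_{\varepsilon+1}} = \mathop{\sim_\varepsilon}$. This route merely re-proves Lemma~\ref{stratifications-progress} inline, so I expect the first, shorter argument to be the one actually used.
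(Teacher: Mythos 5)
Your proof is correct and is exactly the paper's argument: the paper's own proof simply cites Lemmas~\ref{stratifications-progress} and~\ref{stratifications-varepsilon}, which is the instantiation-and-substitution you describe. Nothing further is needed.
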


\begin{proof}
From Lemmas~\ref{stratifications-progress}
and~\ref{stratifications-varepsilon}.
\end{proof}

Lemma~\ref{stratifications-bisimilarity} proves
that~$\mathop{\sim_\varepsilon}$ is included in bisimilarity, which is
defined as the union of all relations progressing to themselves.  The
opposite inclusion, which we will not need in this note, is a
straightforward exercise (again using transfinite induction
over~$\alpha$).

\subsection{The Largest Respectful Function}

We can now give an explicit characterization of the largest respectful
function:

\begin{defi}[$\LRF$]
\label{definition-LRF}
$$\LRF(R) := \bigcap \{\mathop{\sim_\alpha} \mid R \subseteq
  \mathop{\sim_\alpha}\}$$
\end{defi}

In other words, by Lemma~\ref{stratifications-subset}, $\LRF(R)$ is
$\sim_\alpha$ for the smallest $\sim_\alpha$ containing all of $R$. In
particular, if $R \subseteq\, \sim_\varepsilon$ then $\LRF(R) =\,
\sim_\varepsilon$. The crucial properties of $\LRF$ are:

\begin{lem}
\label{monotone}
$\LRF$ is monotone, i.e., for all relations~$R$ and~$S$, $$R \subseteq
S \Longrightarrow \LRF(R) \subseteq \LRF(S)$$
\end{lem}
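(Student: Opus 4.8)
The plan is to reduce the statement to the elementary set-theoretic fact that intersecting over a larger index family yields a smaller relation. First I would assume $R \subseteq S$ and unfold the definition of $\LRF$ from Definition~\ref{definition-LRF}, so that the goal becomes
$$\bigcap \{\mathop{\sim_\alpha} \mid R \subseteq \mathop{\sim_\alpha}\} \subseteq \bigcap \{\mathop{\sim_\alpha} \mid S \subseteq \mathop{\sim_\alpha}\}.$$

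The key step is to observe that the defining family is \emph{anti}-monotone in its relation argument. Concretely, whenever $S \subseteq \mathop{\sim_\alpha}$ holds, transitivity of inclusion together with the hypothesis $R \subseteq S$ gives $R \subseteq \mathop{\sim_\alpha}$. Hence every stratum appearing in the family for $S$ also appears in the family for $R$, i.e.
$$\{\mathop{\sim_\alpha} \mid S \subseteq \mathop{\sim_\alpha}\} \subseteq \{\mathop{\sim_\alpha} \mid R \subseteq \mathop{\sim_\alpha}\}.$$

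From here the conclusion is immediate: the intersection of a family of relations is contained in the intersection of any subfamily, so intersecting over the family for $R$ (the larger one) produces something contained in the intersection over the family for $S$. This is exactly $\LRF(R) \subseteq \LRF(S)$, which completes the argument.

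I expect no genuine obstacle here; the proof is a one-line consequence of the contravariance of intersection. The only point demanding care is precisely this contravariance --- the inclusion of relations runs opposite to the inclusion of the index families --- which is an easy place to slip a sign. One could alternatively argue via the remark following Definition~\ref{definition-LRF}, that $\LRF(R)$ equals the smallest stratum containing $R$ (using Lemma~\ref{stratifications-subset}), observing that enlarging $R$ can only push this smallest containing stratum further along the decreasing chain and hence to a smaller relation; but the index-family argument is shorter and avoids having to reason about the existence of a least such ordinal.
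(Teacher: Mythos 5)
Your argument is correct and is essentially identical to the paper's proof: both observe that $R \subseteq S$ makes the index family $\{\mathop{\sim_\alpha} \mid S \subseteq \mathop{\sim_\alpha}\}$ a subfamily of $\{\mathop{\sim_\alpha} \mid R \subseteq \mathop{\sim_\alpha}\}$, and then invoke the contravariance of intersection. No gaps.
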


\proof
Suppose~$R \subseteq S$.  Then $S \subseteq \mathop{\sim_\alpha}$
implies $R \subseteq \mathop{\sim_\alpha}$.  Hence $$\LRF(R) = \bigcap
\{\mathop{\sim_\alpha} \mid R \subseteq \mathop{\sim_\alpha}\}
\subseteq \bigcap \{\mathop{\sim_\alpha} \mid S \subseteq
\mathop{\sim_\alpha}\} = \LRF(S)
\rlap{\hbox to 77 pt{\hfill\qEd}}$$
\newpage

\begin{thm}
\label{lrf-is-respectful}
$\LRF$ is respectful.
\end{thm}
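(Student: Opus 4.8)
The plan is to verify the two conjuncts in the definition of respectfulness (Definition~\ref{respectfulness}) separately, under the standing hypotheses $R \subseteq S$ and $R \progress S$. The inclusion $\LRF(R) \subseteq \LRF(S)$ is already available for free: it is exactly the conclusion of monotonicity (Lemma~\ref{monotone}) applied to $R \subseteq S$. Hence the entire content of the theorem lies in establishing the progression $\LRF(R) \progress \LRF(S)$, and I would devote the proof to that.

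For the progression the key observation is that $\LRF(S)$ is not merely an intersection of strata but coincides with a single stratum. By the remark following Definition~\ref{definition-LRF} (which rests on the nestedness in Lemma~\ref{stratifications-subset}), $\LRF(S) = \mathop{\sim_\alpha}$ for some ordinal~$\alpha$, namely the smallest $\mathop{\sim_\alpha}$ containing~$S$. I would fix such an~$\alpha$ at the outset and record that $S \subseteq \mathop{\sim_\alpha}$, since every set appearing in the defining intersection of $\LRF(S)$ contains~$S$.

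The argument then runs as a single short chain. From $R \progress S$ and $S \subseteq \mathop{\sim_\alpha}$, Lemma~\ref{progress-monotone} gives $R \progress \mathop{\sim_\alpha}$, and the successor clause of Definition~\ref{def:stratifications} immediately yields $R \subseteq \mathop{\sim_{\alpha+1}}$. As $\mathop{\sim_{\alpha+1}}$ is therefore one of the strata containing~$R$, the definition of $\LRF$ gives $\LRF(R) \subseteq \mathop{\sim_{\alpha+1}}$. Finally Lemma~\ref{stratifications-progress} supplies $\mathop{\sim_{\alpha+1}} \progress \mathop{\sim_\alpha}$, so one further application of Lemma~\ref{progress-monotone}, taking $\LRF(R)$ as the smaller source and $\mathop{\sim_\alpha} = \LRF(S)$ as the unchanged target, delivers $\LRF(R) \progress \LRF(S)$ as desired.

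The step I expect to need the most care is the very first one, namely justifying that $\LRF(S)$ genuinely is a single stratum $\mathop{\sim_\alpha}$ with $S \subseteq \mathop{\sim_\alpha}$, rather than a strictly descending intersection with no least element. This is precisely where the convergence machinery pays off: the nestedness (Lemma~\ref{stratifications-subset}) together with the eventual stabilization at~$\varepsilon$ (Lemmas~\ref{stratifications-varepsilon} and~\ref{stratifications-stable}) guarantee that the decreasing intersection defining $\LRF(S)$ is attained by its smallest member, so that $\LRF(S) = \mathop{\sim_\alpha}$ is legitimate in every case, including when $S \subseteq \mathop{\sim_\varepsilon}$. Once this representation is secured, the rest is a routine assembly of the monotonicity and progression lemmas, and notably the proof of the theorem itself requires no transfinite induction.
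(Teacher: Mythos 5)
Your proof is correct and follows essentially the same route as the paper: the core chain ($R \progress S \subseteq \mathop{\sim_\alpha}$ gives $R \subseteq \mathop{\sim_{\alpha+1}}$, hence $\LRF(R) \subseteq \mathop{\sim_{\alpha+1}} \progress \mathop{\sim_\alpha} = \LRF(S)$ via Lemmas~\ref{progress-monotone} and~\ref{stratifications-progress}) is exactly the paper's argument. The only difference is organizational: the paper splits into two cases according to whether $S$ is contained in every stratum, which amounts to re-deriving inside the proof the fact you invoke up front --- that $\LRF(S)$ is the single smallest stratum containing $S$ (justified, as you note, by Lemmas~\ref{stratifications-subset} and~\ref{stratifications-varepsilon}, plus the observation that the least $\alpha$ with $S \not\subseteq \mathop{\sim_\alpha}$, if any, must be a successor) --- so your uniform treatment is a harmless streamlining.
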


\begin{proof}
Suppose~$R \subseteq S$ and~$R \progress S$.  Then~$\LRF(R) \subseteq
\LRF(S)$ by Lemma~\ref{monotone}.  It remains to show $\LRF(R)
\progress \LRF(S)$.  We consider two cases.

Case~1: $S \subseteq \mathop{\sim_\alpha}$ for all ordinals~$\alpha$.
Then in particular $S \subseteq \mathop{\sim_\varepsilon}$.  Thus also
$R \subseteq \mathop{\sim_\varepsilon}$, and $\LRF(R) =
\mathop{\sim_\varepsilon}\ \progress\ \mathop{\sim_\varepsilon} =
\LRF(S)$ by Lemma~\ref{stratifications-bisimilarity}.

Case~2: There exists an ordinal~$\alpha$ such that $S \not\subseteq
\mathop{\sim_\alpha}$.  Since the ordinals are well-ordered, we may
wlog.\ assume that~$\alpha$ is minimal, i.e., $S \subseteq
\mathop{\sim_\beta}$ for all $\beta < \alpha$.  Note that $\alpha \neq
0$ since $S \subseteq {\bf S} \times {\bf S} = \mathop{\sim_0}$.
Moreover,~$\alpha$ is not a limit ordinal, since $S \subseteq
\mathop{\sim_\beta}$ for all $\beta < \lambda$ implies $S \subseteq
\bigcap_{\beta < \lambda} \mathop{\sim_\beta} =
\mathop{\sim_\lambda}$.  Thus $\alpha = \gamma + 1$ for some
ordinal~$\gamma$.

From $R \progress S$ and $S \subseteq \mathop{\sim_\gamma}$ we have $R
\progress \mathop{\sim_\gamma}$ by Lemma~\ref{progress-monotone}.
Hence $R \subseteq \bigcup \{X \mid X \progress \mathop{\sim_\gamma}\}
= \mathop{\sim_{\gamma+1}}$.  Therefore $\LRF(R) = \bigcap
\{\mathop{\sim_\beta} \mid R \subseteq \mathop{\sim_\beta}\} \subseteq
\mathop{\sim_{\gamma+1}}$.

Lemma~\ref{stratifications-subset} implies $S \not\subseteq
\mathop{\sim_\beta}$ for all $\beta \geq \alpha$.  Thus $\LRF(S) =
\bigcap \{\mathop{\sim_\beta} \mid S \subseteq \mathop{\sim_\beta}\} =
\bigcap_{\beta<\alpha} \mathop{\sim_\beta} = \bigcap_{\beta \leq
  \gamma} \mathop{\sim_\beta} = \mathop{\sim_\gamma}$, again using
Lemma~\ref{stratifications-subset}.

$\LRF(R) \progress \LRF(S)$ now follows from $\mathop{\sim_{\gamma+1}}
\progress \mathop{\sim_\gamma}$ (Lemma~\ref{stratifications-progress})
and Lemma~\ref{progress-monotone}.
\end{proof}

Finally we establish that $\LRF$ is the largest respectful function in
the sense that it contains all respectful functions:

\begin{thm}
\label{lrf-is-largest}
If~$\F$ is respectful, then $\F(R) \subseteq \LRF(R)$ for every
relation~$R$.
\end{thm}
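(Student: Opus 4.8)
The plan is to reduce the whole statement to a single transfinite induction. Since $\LRF(R) = \bigcap\{\mathop{\sim_\alpha} \mid R \subseteq \mathop{\sim_\alpha}\}$, it suffices to prove the claim that for every ordinal~$\alpha$ and every relation~$R$, $R \subseteq \mathop{\sim_\alpha}$ implies $\F(R) \subseteq \mathop{\sim_\alpha}$. Granting this claim, $\F(R)$ is contained in every $\mathop{\sim_\alpha}$ appearing in the intersection that defines $\LRF(R)$, and therefore in $\LRF(R)$ itself, which is exactly the desired conclusion.

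I would establish the claim by transfinite induction on~$\alpha$. The base case $\alpha = 0$ is immediate, since $\mathop{\sim_0} = \states \times \states$ is the universal relation and so $\F(R) \subseteq \mathop{\sim_0}$ holds trivially. The limit case $\alpha = \lambda$ is also routine: if $R \subseteq \mathop{\sim_\lambda} = \bigcap_{\beta<\lambda}\mathop{\sim_\beta}$, then $R \subseteq \mathop{\sim_\beta}$ for each $\beta < \lambda$, so the induction hypothesis gives $\F(R) \subseteq \mathop{\sim_\beta}$ for each such~$\beta$, and intersecting over $\beta < \lambda$ yields $\F(R) \subseteq \mathop{\sim_\lambda}$.

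The successor case $\alpha = \gamma+1$ is where the real work lies, and it is the step I expect to be the main obstacle, because respectfulness is not the same as plain monotonicity: it lets us conclude $\F(R) \subseteq \F(S)$ and $\F(R) \progress \F(S)$ only when $R \progress S$ holds \emph{in addition} to $R \subseteq S$. The trick is to feed $\F$ the right pair. Assuming $R \subseteq \mathop{\sim_{\gamma+1}}$, I would first note $R \subseteq \mathop{\sim_\gamma}$ by Lemma~\ref{stratifications-successor}, and then, combining $R \subseteq \mathop{\sim_{\gamma+1}}$ with $\mathop{\sim_{\gamma+1}} \progress \mathop{\sim_\gamma}$ (Lemma~\ref{stratifications-progress}) through Lemma~\ref{progress-monotone}, also $R \progress \mathop{\sim_\gamma}$. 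Respectfulness applied to $R \subseteq \mathop{\sim_\gamma}$ together with $R \progress \mathop{\sim_\gamma}$ then yields in particular $\F(R) \progress \F(\mathop{\sim_\gamma})$.

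To finish the case I would invoke the induction hypothesis at~$\gamma$ on the relation $\mathop{\sim_\gamma}$ itself, which trivially satisfies $\mathop{\sim_\gamma} \subseteq \mathop{\sim_\gamma}$, obtaining $\F(\mathop{\sim_\gamma}) \subseteq \mathop{\sim_\gamma}$. Weakening $\F(R) \progress \F(\mathop{\sim_\gamma})$ along this inclusion by a further application of Lemma~\ref{progress-monotone} gives $\F(R) \progress \mathop{\sim_\gamma}$, and hence $\F(R) \subseteq \bigcup\{X \mid X \progress \mathop{\sim_\gamma}\} = \mathop{\sim_{\gamma+1}}$ by Definition~\ref{def:stratifications}. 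This closes the induction, and the theorem follows as explained in the first paragraph.
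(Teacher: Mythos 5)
Your proposal is correct and follows essentially the same route as the paper's own proof: the identical transfinite induction on the statement $R \subseteq \mathop{\sim_\alpha} \Longrightarrow \F(R) \subseteq \mathop{\sim_\alpha}$, with the successor case handled by deriving $R \progress \mathop{\sim_\gamma}$ and $R \subseteq \mathop{\sim_\gamma}$, applying respectfulness, and closing with the induction hypothesis on $\mathop{\sim_\gamma}$ itself. No gaps.
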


\begin{proof}
Suppose $\F$ is respectful.  We prove $$\forall R.\ R \subseteq
\mathop{\sim_\alpha} \Longrightarrow \F(R) \subseteq
\mathop{\sim_\alpha}$$ for all ordinals~$\alpha$ by transfinite
induction.  ($\F(R) \subseteq \LRF(R)$ then follows from the
definition of~$\LRF$.)

\begin{enumerate}
\item $\alpha = 0$: Trivially $\F(R) \subseteq {\bf S} \times {\bf S}
  = \mathop{\sim_0}$.
\item $\alpha = \beta + 1$: Suppose $R \subseteq
  \mathop{\sim_{\beta+1}}$.  By Lemma~\ref{stratifications-progress},
  $\mathop{\sim_{\beta+1}} \progress \mathop{\sim_\beta}$.  Thus $R
  \progress \mathop{\sim_\beta}$ by Lemma~\ref{progress-monotone}.
  Moreover, $\mathop{\sim_{\beta+1}} \subseteq \mathop{\sim_\beta}$ by
  Lemma~\ref{stratifications-successor}.  Hence $R \subseteq
  \mathop{\sim_\beta}$, and $\F(R) \progress \F(\mathop{\sim_\beta})$
  follows from Definition~\ref{respectfulness}.  The induction
  hypothesis (applied to~$\mathop{\sim_\beta}$) implies
  $\F(\mathop{\sim_\beta}) \subseteq \mathop{\sim_\beta}$.  Hence
  $\F(R) \progress \mathop{\sim_\beta}$ by
  Lemma~\ref{progress-monotone}.  Therefore $\F(R) \subseteq \bigcup
  \{X \mid X \progress \mathop{\sim_\beta}\} =
  \mathop{\sim_{\beta+1}}$.
\item $\alpha = \lambda$: Suppose $R \subseteq \mathop{\sim_\lambda} =
  \bigcap_{\beta < \lambda} \mathop{\sim_\beta}$.  Then $R \subseteq
  \mathop{\sim_\beta}$ for all $\beta < \lambda$, and the induction
  hypothesis implies $\F(R) \subseteq \mathop{\sim_\beta}$.  Hence
  $\F(R) \subseteq \bigcap_{\beta < \lambda} \mathop{\sim_\beta} =
  \mathop{\sim_\lambda}$.\qedhere
\end{enumerate}
\end{proof}

\section{Generalisation}

We have established that Definition~\ref{definition-LRF} defines the
largest respectful function.
Our constructions and proofs are quite general and transfer smoothly
to other settings.  We can for example recast our result in a setting
of complete lattices as follows.  Assume that a set~$A$ with
order~$\leq$ is a complete lattice with top element~$\top$.  Let~$a$,
$b$ range over~$A$.  For a binary relation~$R$ on~$A$, the pre-image
of~$b$, i.e., $\{a \mid a \R b\}$, is written $R\,b$.

\begin{defi}[cf.~\protect{\cite[Definition 1.16]{Pous2007}}]
\label{progression}
A {\em progression} is a binary relation $R$ on $A$ such that:
\begin{enumerate}
\item
\label{compliant}
$\mathop{\leq} \circ \R \circ \mathop{\leq} \ \subseteq\ R$
\item
\label{inbounded}
$\forall b \in A.\; \bigvee R\,b \in R\,b$
\end{enumerate}
\end{defi}

As an example, Lemma~\ref{progress-monotone} establishes that
$\progress$ satisfies condition~(\ref{compliant}).  Moreover, it is
obvious from Definition~\ref{progress} that for any relation~$S$,
$\{R\mid R \progress S\}$ is closed under arbitrary union.  Hence
$\progress$ also satisfies~(\ref{inbounded}).

\begin{defi}
\label{Rmonotone}
For a binary relation $R$ on $A$, say that a function $f\colon A
\rightarrow A$ is {\em $R$-monotone} if it is monotone with respect to
$\mathop{\leq} \cap \R$, i.e., $a \leq b \wedge a\R b \Longrightarrow
f(a) \leq f(b) \wedge f(a)\R f(b)$.
\end{defi}

For example, by Definition~\ref{respectfulness} respectfulness is
$\progress$-monotonicity for the lattice of binary relations under the
inclusion order.

\begin{defi}
\label{general-stratification}
Given a relation $R$, for every ordinal $\alpha$ define $z_\alpha \in
A$ by the following transfinite induction:
\[\begin{array}{lcl}
z_0  & := & \top \\
z_{\alpha+1} & := & \bigvee R\, z_\alpha \\
z_\lambda  & := & \bigwedge \{z_\alpha \mid \alpha < \lambda\}
\end{array}\]
\end{defi}

\begin{thm}
\label{general-theorem}
If $R$ is a progression, the unique largest $R$-monotone function is
\[\lambda x.\; \bigwedge \{z_\alpha \mid x \leq z_\alpha\}\]
\end{thm}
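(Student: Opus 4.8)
The plan is to replay the entire concrete development of the preceding subsections in the lattice, using the dictionary in which $\subseteq$ is replaced by $\leq$, unions $\bigcup$ by joins $\bigvee$, intersections $\bigcap$ by meets $\bigwedge$, the universal relation $\states\times\states$ by $\top$, the progress relation $\progress$ by $R$, the strata $\sim_\alpha$ by $z_\alpha$, and respectfulness by $R$-monotonicity. Write $G(x) := \bigwedge\{z_\alpha \mid x \leq z_\alpha\}$ for the candidate function. Since the pointwise order on functions is antisymmetric, the word ``unique'' is free and the statement reduces to two claims: that $G$ is itself $R$-monotone, and that every $R$-monotone $f$ satisfies $f(x) \leq G(x)$ for all $x$. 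I will establish these by first porting the stratification lemmas and then imitating the proofs of Theorems~\ref{lrf-is-respectful} and~\ref{lrf-is-largest}.

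First I would record the lattice analogs of the stratification lemmas, noting in each case which clause of Definition~\ref{progression} does the work. Compliance (condition~(\ref{compliant})) is exactly the analog of Lemma~\ref{progress-monotone}; since $\leq$ is reflexive, it yields $R\,b \subseteq R\,b'$ whenever $b \leq b'$, so the map $b \mapsto \bigvee R\,b$ is $\leq$-monotone. With this monotonicity the analogs of Lemmas~\ref{stratifications-successor} and~\ref{stratifications-subset} go through by the same transfinite inductions, giving $z_{\alpha+1} \leq z_\alpha$ and $\alpha \leq \beta \Rightarrow z_\beta \leq z_\alpha$. Hartogs's theorem then supplies, as in Lemma~\ref{stratifications-varepsilon}, an ordinal $\varepsilon$ with $z_{\varepsilon+1} = z_\varepsilon$, and the stabilization lemma (analog of Lemma~\ref{stratifications-stable}) follows. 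The inbounded clause (condition~(\ref{inbounded})) is what delivers the analog of Lemma~\ref{stratifications-progress}: by construction $z_{\alpha+1} = \bigvee R\,z_\alpha \in R\,z_\alpha$, i.e.\ $z_{\alpha+1} \R z_\alpha$; combined with $z_{\varepsilon+1} = z_\varepsilon$ this gives the self-progression $z_\varepsilon \R z_\varepsilon$ (analog of Lemma~\ref{stratifications-bisimilarity}).

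Next I would show $G$ is $R$-monotone. The $\leq$-part is the analog of Lemma~\ref{monotone}: if $x \leq y$ then $\{z_\alpha \mid y \leq z_\alpha\} \subseteq \{z_\alpha \mid x \leq z_\alpha\}$, so $G(x) \leq G(y)$. The $R$-part, $x \leq y \wedge x \R y \Rightarrow G(x) \R G(y)$, is the heart of the argument and mirrors Theorem~\ref{lrf-is-respectful} through the same two cases. If $y \leq z_\alpha$ for every $\alpha$, then both $G(x)$ and $G(y)$ collapse to $z_\varepsilon$ and we conclude by $z_\varepsilon \R z_\varepsilon$. Otherwise let $\alpha = \gamma+1$ be the least ordinal with $y \not\leq z_\alpha$ (it is a successor, by the same elimination of $0$ and of limit ordinals as before). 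Compliance applied to $x \R y$ and $y \leq z_\gamma$ gives $x \R z_\gamma$, hence $x \leq \bigvee R\,z_\gamma = z_{\gamma+1}$ and so $G(x) \leq z_{\gamma+1}$; minimality of $\alpha$ together with the subset lemma pins down $G(y) = z_\gamma$. Then $z_{\gamma+1} \R z_\gamma$ and one further application of compliance, sandwiching $G(x) \leq z_{\gamma+1}$ and $z_\gamma \leq G(y)$, yield $G(x) \R G(y)$.

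Finally I would prove $G$ is largest by transfinite induction, copying Theorem~\ref{lrf-is-largest}: for $R$-monotone $f$, show $x \leq z_\alpha \Rightarrow f(x) \leq z_\alpha$ for all $\alpha$, whence $f(x) \leq \bigwedge\{z_\alpha \mid x \leq z_\alpha\} = G(x)$. The successor step is where $R$-monotonicity of $f$ is used: from $x \leq z_{\beta+1}$ one derives $x \leq z_\beta$ and, via compliance and $z_{\beta+1} \R z_\beta$, also $x \R z_\beta$; applying $R$-monotonicity gives $f(x) \R f(z_\beta)$, the induction hypothesis applied to $z_\beta$ gives $f(z_\beta) \leq z_\beta$, and a closing use of compliance reaches $f(x) \leq \bigvee R\,z_\beta = z_{\beta+1}$; the base and limit cases are immediate. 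I expect the only real obstacle to be the bookkeeping in the successor case of the $R$-monotonicity argument --- correctly evaluating $G(x)$ and $G(y)$ and threading the two compliance applications --- since everything else is a mechanical transliteration of the relational proofs.
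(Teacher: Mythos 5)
Your proposal is correct and follows essentially the same route as the paper: the paper itself proves Theorem~\ref{general-theorem} by transliterating the concrete development (with condition~(\ref{compliant}) playing the role of Lemma~\ref{progress-monotone} and condition~(\ref{inbounded}) supplying the analog of Lemma~\ref{stratifications-progress}, i.e.\ $z_{\alpha+1} \R z_\alpha$), which is exactly the dictionary and the single point of deviation you identify. The case analysis for $R$-monotonicity of $G$ and the transfinite induction for maximality match the proofs of Theorems~\ref{lrf-is-respectful} and~\ref{lrf-is-largest} step for step.
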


The proof follows the previous section closely, with an arbitrary
progression instead of~$\progress$. The one point of deviation is in
the proof of the counterpart of Lemma~\ref{stratifications-progress}
above, that $z_{\alpha+1} \R z_\alpha$. Here we use
condition~(\ref{inbounded}) of Definition~\ref{progression} to show
that $\bigvee R\, z_\alpha$, i.e., $z_{\alpha+1}$, must lie in~$R\,
z_\alpha$.

As a special case we can consider~$A$ to be the lattice of binary
relations on states ordered by inclusion, and~$R$ to be the
progression relation~$\progress$; we then recover the theorems of the
previous section.

\medskip

\noindent \emph{Respectful vs.\ compatible.}
Pous has developed a theory of up-to techniques based on
\emph{compatible} rather than respectful functions.
Definition~\ref{progression} is equivalent to Definition~1.16
in~\cite{Pous2007}.  To a progression~$R$ Pous associates the monotone
function~$s_R\colon A\to A$ given by $\lambda x.\; \bigvee R\,x$.
Conversely, given a monotone function~$s\colon A\to A$, the set
$\{(a,b) \mid a \leq s(b)\}$ defines a progression.  A monotone
function $f\colon A\to A$ is \emph{$s$-compatible} if $f \circ s \leq
s \circ f$ pointwise.

According to Proposition~1.17(ii) in the same paper, a monotone
function~$f\colon A\to A$ is $s_R$-compatible (for a progression~$R$)
iff, for all $a$, $b\in A$, $a \R b$ implies $f(a) \R f(b)$.  Note
that $R \cap \mathop{\leq}$ is a progression whenever~$R$ is a
progression.  It follows that for monotone functions, $R$-monotonicity
(Definition~\ref{Rmonotone}) is exactly $s_{R \cap
  \mathop{\leq}}$-compatibility.

We did not restrict ourselves to monotone functions in this note, but
since the largest respectful function is in fact monotone
(cf.\ Lemma~\ref{monotone}), Theorem~\ref{general-theorem} thus gives
the largest $s_{R \cap \mathop{\leq}}$-compatible function for any
progression~$R$.

A thorough analysis and comparison between respectfulness and
compatibility is in \cite{Pous2016}, Section~9.  In general,
$R$-monotonicity and $s_R$-compatibility are not equivalent.  For
monotone functions, $R$-monotonicity is strictly weaker than
$s_R$-compatibility.  However, these differences turn out to be
irrelevant when we consider the largest function, allowing us to
establish a more direct connection between respectfulness and
compatibility.

\begin{thm}
\label{theorem-companion}
If $R$ is a progression, the unique largest $s_R$-compatible function
is
\[\lambda x.\; \bigwedge \{z_\alpha \mid x \leq z_\alpha\}\]
\end{thm}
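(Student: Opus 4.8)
The plan is to identify the displayed function, call it $g := \lambda x.\, \bigwedge\{z_\alpha \mid x \leq z_\alpha\}$, as simultaneously an upper bound for all $s_R$-compatible functions and itself $s_R$-compatible; uniqueness is then automatic, since a poset has at most one largest element. Throughout I would use the transferred versions of the lemmas of Section~2: the $z_\alpha$ are $\leq$-decreasing (Lemma~\ref{stratifications-subset}), they stabilise at some $z_\varepsilon$ with $z_{\varepsilon+1}=z_\varepsilon$ (Lemmas~\ref{stratifications-varepsilon} and~\ref{stratifications-stable}), $z_{\alpha+1}\R z_\alpha$ (Lemma~\ref{stratifications-progress}, as already noted in the remark following Theorem~\ref{general-theorem}), $z_\varepsilon\R z_\varepsilon$ (Lemma~\ref{stratifications-bisimilarity}), and $g$ is monotone (Lemma~\ref{monotone}). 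I would also invoke Pous's characterisation (Proposition~1.17(ii) in~\cite{Pous2007}): a monotone $f$ is $s_R$-compatible iff $a\R b$ implies $f(a)\R f(b)$ for all $a,b\in A$.

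For the upper-bound direction I would show that every $s_R$-compatible function $f$ satisfies $f\leq g$ pointwise. Since $s_R$-compatibility is defined only for monotone functions, and for monotone functions $R$-monotonicity is weaker than $s_R$-compatibility (as recorded in the discussion preceding this theorem), every such $f$ is $R$-monotone in the sense of Definition~\ref{Rmonotone}. Theorem~\ref{general-theorem} identifies $g$ as the largest $R$-monotone function, so $f\leq g$.

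The substantial direction is that $g$ is itself $s_R$-compatible, i.e.\ by Pous's characterisation that $a\R b$ implies $g(a)\R g(b)$. For each $x$ write $g(x)=z_{\iota(x)}$, where $\iota(x)$ is the largest ordinal $\leq\varepsilon$ with $x\leq z_{\iota(x)}$; that this maximum exists, and that the meet in the definition of $g$ is attained at it, follows from the same initial-segment and successor reasoning as in Case~2 of Theorem~\ref{lrf-is-respectful}. The key step is that $R$ \emph{raises the stratum}: if $b\leq z_\beta$ then, since $a\R b$, condition~(\ref{compliant}) of Definition~\ref{progression} (closure under enlarging the second argument) gives $a\R z_\beta$, whence $a\leq\bigvee R\,z_\beta=z_{\beta+1}$. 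Applied with $\beta=\iota(b)$ this yields $a\leq z_{\iota(b)+1}$, so $\iota(a)\geq\iota(b)+1$ whenever $\iota(b)<\varepsilon$.

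Finally I would assemble these facts into $g(a)\R g(b)$. If $\iota(b)<\varepsilon$, then $z_{\iota(a)}\leq z_{\iota(b)+1}$ since $z$ is decreasing, and from $z_{\iota(b)+1}\R z_{\iota(b)}$ together with condition~(\ref{compliant}) (closure under shrinking the first argument) I obtain $z_{\iota(a)}\R z_{\iota(b)}$, that is $g(a)\R g(b)$. If instead $\iota(b)=\varepsilon$, then $b\leq z_\varepsilon$ forces $a\leq z_{\varepsilon+1}=z_\varepsilon$, so $g(a)=g(b)=z_\varepsilon$ and $g(a)\R g(b)$ follows from $z_\varepsilon\R z_\varepsilon$. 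The main obstacle is precisely this last direction: $R$-monotonicity constrains $g$ only on comparable pairs, whereas $s_R$-compatibility demands $g(a)\R g(b)$ for arbitrary $a\R b$. The stratification closes the gap, because $a\R b$ always places $a$ at a strictly higher stratum than $b$ (or both at the fixed point $z_\varepsilon$), and the progression axioms propagate the single relation $z_{\beta+1}\R z_\beta$ to all such pairs of indices.
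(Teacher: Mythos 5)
Your proof is correct, and its core --- establishing that $g$ is $s_R$-compatible via the observation that $a \R b$ and $b \leq z_\alpha$ imply $a \R z_\alpha$ by condition~(\ref{compliant}) and hence $a \leq \bigvee R\,z_\alpha = z_{\alpha+1}$ --- is exactly the step the paper singles out as the one genuinely new ingredient. Where you diverge is the maximality direction: the paper proposes re-running the transfinite induction of Theorem~\ref{lrf-is-largest} with $s_R$-compatibility in place of $R$-monotonicity, whereas you obtain it for free from Theorem~\ref{general-theorem} by noting that every $s_R$-compatible function is by definition monotone and that, for monotone functions, $s_R$-compatibility implies $R$-monotonicity (a fact recorded in the discussion preceding the theorem). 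That is a legitimate and slightly slicker route, since it reuses the already-proved maximality result instead of repeating the induction. The one point to watch is that your compatibility argument goes through Pous's Proposition~1.17(ii), which applies only to monotone functions; you correctly discharge this by citing the transferred Lemma~\ref{monotone} for the monotonicity of $g$. Your explicit bookkeeping with $\iota(x)$ and the case split between $\iota(b) < \varepsilon$ and $\iota(b) = \varepsilon$ is more detailed than the paper's sketch but mirrors the structure of Case~2 of Theorem~\ref{lrf-is-respectful}, as you indicate, and all the individual steps check out.
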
\smallskip

\noindent The proof is a minor adaptation of the proof for
Theorem~\ref{general-theorem}.  In particular, to prove that $\lambda
x.\; \bigwedge \{z_\alpha \mid x \leq z_\alpha\}$ is $s_R$-compatible,
we note that $a \R b$ and $b \leq z_\alpha$ implies $a \R z_\alpha$ by
progression, hence $a \leq \bigvee R\,z_\alpha = z_{\alpha+1} \leq
z_\alpha$.  It is then straightforward to adjust the (generalized)
proofs of Theorems~\ref{lrf-is-respectful} and~\ref{lrf-is-largest} to
use $s_R$-compatibility instead of $R$-monotonicity.

As a corollary of this explicit characterization, we
recover~\cite[Proposition~9.1]{Pous2016}: the largest respectful
function and the largest compatible function coincide.

\section*{Acknowledgement}

\noindent We are grateful to Johannes \AA{}man Pohjola for pointing
out the connection with~\cite{Pous2007}, and for discussions with
Damien Pous.

\bibliographystyle{alpha}
\bibliography{bibliography}

\end{document}